\acrodef{ris}[RIS]{reconfigurable intelligent surface}
\acrodef{isac}[ISAC]{Integrated sensing and communication}
\acrodef{bs}[ISAC]{base station}
\acrodef{dfbs}[DFBS]{dual function radar communication base station}
\acrodef{ue}[UE]{user equipment}
\acrodef{sinr}[SINR]{signal-to-interference-plus-noise ratio}
\acrodef{snr}[SNR]{signal-to-noise ratio}
\acrodef{mui}[MUI]{multi-user-interference}
\acrodef{ula}[ULA]{uniform linear array}
\acrodef{mimo}[MIMO]{multiple input multiple output}
\acrodef{LoS}[LoS]{line-of-sight}
\acrodef{NLoS}[NLoS]{non-line-of-sight}
\acrodef{dfrc}[DFRC]{dual function radar communication}
\newtheorem{theorem}{Theorem}
\def\cast{{
   \mathord{
      \hbox to 0em{
         \ooalign{
	   \smash{\hbox{$\ast$}}\crcr
	   \smash{\hskip-1pt\Large\hbox{$\circ$}} }
	 \hidewidth}
      \phantom{\bigcirc}
} }}
\newcommand{\rH}{^{ \raisebox{1pt}{$\rm \scriptscriptstyle H$}}}
\newcommand{\rT}{^{ \raisebox{1.2pt}{$\rm \scriptstyle T$}}}
\newcommand{\bds}{\begin {itemize}}
\newcommand{\eds}{\end {itemize}}
\newcommand{\bdf}{\begin{definition}}
\newcommand{\blm}{\begin{lemma}}
\newcommand{\edf}{\end{definition}}
\newcommand{\elm}{\end{lemma}}
\newcommand{\bthm}{\begin{theorem}}
\newcommand{\ethm}{\end{theorem}}
\newcommand{\bprp}{\begin{prop}}
\newcommand{\eprp}{\end{prop}}
\newcommand{\bcl}{\begin{claim}}
\newcommand{\ecl}{\end{claim}}
\newcommand{\bcr}{\begin{coro}}
\newcommand{\ecr}{\end{coro}}
\newcommand{\bquest}{\begin{question}}
\newcommand{\equest}{\end{question}}
\newcommand{\larrow}{{\larrow}}
\newcommand{\argmin}{\ensuremath{\mathrm{arg}\min}}
\newcommand{\argmax}{\ensuremath{\mathrm{arg}\max}}
\newcommand{\cA}{{\ensuremath{\mathcal{A}}}}
\newcommand{\cC}{{\ensuremath{\mathcal{C}}}}
\newcommand{\cN}{{\ensuremath{\mathcal{N}}}}
\newcommand{\cP}{{\ensuremath{\mathcal{P}}}}
\def\mbC{{\ensuremath{\mathbb C}}}
\def\mbE{{\ensuremath{\mathbb E}}}
\newcommand{\va}{{\ensuremath{{\mathbf{a}}}}}
\newcommand{\vc}{{\ensuremath{{\mathbf{c}}}}}
\newcommand{\vd}{{\ensuremath{{\mathbf{d}}}}}
\newcommand{\vg}{{\ensuremath{{\mathbf{g}}}}}
\newcommand{\vh}{{\ensuremath{{\mathbf{h}}}}}
\newcommand{\vn}{{\ensuremath{{\mathbf{n}}}}}
\newcommand{\vs}{{\ensuremath{{\mathbf{s}}}}}
\newcommand{\vt}{{\ensuremath{{\mathbf{t}}}}}
\newcommand{\vu}{{\ensuremath{{\mathbf{u}}}}}
\newcommand{\vv}{{\ensuremath{{\mathbf{v}}}}}
\newcommand{\vx}{{\ensuremath{{\mathbf{x}}}}}
\newcommand{\mA}{{\ensuremath{\mathbf{A}}}}
\newcommand{\mB}{{\ensuremath{\mathbf{B}}}}
\newcommand{\mC}{{\ensuremath{\mathbf{C}}}}
\newcommand{\mE}{{\ensuremath{\mathbf{E}}}}
\newcommand{\mH}{{\ensuremath{\mathbf{H}}}}
\newcommand{\mI}{{\ensuremath{\mathbf{I}}}}
\newcommand{\mJ}{{\ensuremath{\mathbf{J}}}}
\newcommand{\mR}{{\ensuremath{\mathbf{R}}}}
\newcommand{\mS}{{\ensuremath{\mathbf{S}}}}
\newcommand{\mT}{{\ensuremath{\mathbf{T}}}}
\newcommand{\mV}{{\ensuremath{\mathbf{V}}}}
\def\IC{\mathbb C}
\def\IN{\mathbb N}
\def\IZ{\mathbb Z}
\def\IR{\mathbb R}
\def\shat{^{\mathchoice{}{}%
 {\,\,\smash{\hbox{\lower4pt\hbox{$\widehat{\null}$}}}}%
 {\,\smash{\hbox{\lower3pt\hbox{$\hat{\null}$}}}}}}
\def\bSigma{{
      \ooalign{
      \smash{\hskip.4pt\raise.4pt\hbox{$\Sigma$}}\vphantom{}\crcr
      \smash{\hskip.7pt\raise.6pt\hbox{$\Sigma$}}\vphantom{}\crcr
      \smash{\hbox{$\Sigma$}}\vphantom{$\Sigma$}}
      \vphantom{\hbox{$\Sigma$}}
      }}
\def\bTheta{{
      \ooalign{
      \smash{\hskip.5pt\raise.5pt\hbox{$\Theta$}}\vphantom{}\crcr
      \smash{\hskip.0pt\raise.1pt\hbox{$\Theta$}}\vphantom{}\crcr
      \smash{\hbox{$\Theta$}}\vphantom{$\Theta$}}
      \vphantom{\hbox{$\Theta$}}
      }}
\def\bDelta{{
      \ooalign{
      \smash{\hskip.4pt\raise.4pt\hbox{$\Delta$}}\vphantom{}\crcr
      \smash{\hskip.7pt\raise.6pt\hbox{$\Delta$}}\vphantom{}\crcr
      \smash{\hbox{$\Delta$}}\vphantom{$\Delta$}}
      \vphantom{\hbox{$\Delta$}}
      }}
\def\bLambda{{
      \ooalign{
      \smash{\hskip.5pt\raise.5pt\hbox{$\Lambda$}}\vphantom{}\crcr
      \smash{\hskip.0pt\raise.1pt\hbox{$\Lambda$}}\vphantom{}\crcr
      \smash{\hbox{$\Lambda$}}\vphantom{$\Lambda$}}
      \vphantom{\hbox{$\Lambda$}}
      }}
\def\bordermatrix#1{\begingroup \m@th
  \@tempdima 8.75\p@
  \setbox\z@\vbox{%
    \def\cr{\crcr\noalign{\kern2\p@\global\let\cr\endline}}%
    \ialign{$##$\hfil\kern2\p@\kern\@tempdima&\thinspace\hfil$##$\hfil
      &&\quad\hfil$##$\hfil\crcr
      \omit\strut\hfil\crcr\noalign{\kern-\baselineskip}%
      #1\crcr\omit\strut\cr}}%
  \setbox\tw@\vbox{\unvcopy\z@\global\setbox\@ne\lastbox}%
  \setbox\tw@\hbox{\unhbox\@ne\unskip\global\setbox\@ne\lastbox}%
  \setbox\tw@\hbox{$\kern\wd\@ne\kern-\@tempdima\left[\kern-\wd\@ne
    \global\setbox\@ne\vbox{\box\@ne\kern2\p@}%
    \vcenter{\kern-\ht\@ne\unvbox\z@\kern-\baselineskip}\,\right]$}%
  \null\;\vbox{\kern\ht\@ne\box\tw@}\endgroup}
\def\argmin{\mathop{\operator@font arg\,min}}
\def\argmax{\mathop{\operator@font arg\,max}}
\newcommand{\bea}{\begin{array}}
\newcommand{\ena}{\end{array}}
\newcommand{\beq}{\begin{equation}}
\newcommand{\enq}{\end{equation}}
\newcommand{\beqa}{\begin{eqnarray}}
\newcommand{\enqa}{\end{eqnarray}}
\newcommand{\beqan}{\begin{eqnarray*}}
\newcommand{\enqan}{\end{eqnarray*}}
\newcommand{\AL}{\begin{enumerate}}
\newcommand{\ALE}{\end{enumerate}}
\def\addots{\mathinner{
    \mkern1mu\raise0pt\vbox{\kern7pt\hbox{.}}
    \mkern2mu\raise4pt\hbox{.}
    \mkern2mu\raise7pt\hbox{.}
    \mkern1mu}}
\def\sddots{\mathinner{
    \mkern.8mu\raise7pt\hbox{.}
    \mkern.8mu\raise4pt\hbox{.}
    \mkern.8mu\raise0pt\vbox{\kern7pt\hbox{.}}
    \mkern1mu}}
\def\saddots{\mathinner{
    \mkern.2mu\raise0pt\vbox{\kern7pt\hbox{.}}
    \mkern.2mu\raise4pt\hbox{.}
    \mkern.2mu\raise7pt\hbox{.}
    \mkern1mu}}
\def\sqplus{\mathbin{
	{\ooalign{\hfil\raise.3ex\hbox{\scriptsize
	+}\hfil\crcr\mathhexbox274\crcr\mathhexbox275}}
	}} 
\def\sqminus{\mathbin{
	{\ooalign{\hfil\raise.3ex\hbox{\scriptsize
	--}\hfil\crcr\mathhexbox274\crcr\mathhexbox275}}
	}}
\def\IC{{
   \mathord{
      \hbox to 0em{
	 \hskip-4pt
         \ooalign{
	   \smash{\hskip1.9pt\raise2.6pt\hbox{$\scriptscriptstyle |$}}\crcr
	   \smash{\hbox{\rm\sf C}} }
	 \hidewidth}
      \phantom{\hbox{\rm\sf C}}
} }}
\def\IN{
    {\ooalign{
   \smash{\hskip2.2pt\raise1.5pt\hbox{$\scriptscriptstyle |$}}\vphantom{}\crcr
   \hbox{\sf N}
	}}
	} 
\def\IZ{
    {\ooalign{
   \smash{\hskip1.9pt\raise0pt\hbox{$\sf Z$}}\vphantom{}\crcr
   \hbox{\sf Z}
	}}
	} 
\def\IR{
    {\ooalign{
   \smash{\hskip2.2pt\raise1.5pt\hbox{$\scriptscriptstyle |$}}\vphantom{}\crcr
   \smash{\hskip2.2pt\raise3.3pt\hbox{$\scriptscriptstyle |$}}\vphantom{}\crcr
   \hbox{\sf R}
	}}
	} 
\DeclareMathAlphabet{\mathcmb}{OT1}{cmr}{b}{n}
\def\bSigma{\ensuremath{\mathcmb{\Sigma}}}
\def\bLambda{\ensuremath{\mathcmb{\Lambda}}}
\def\bTheta{\ensuremath{\mathcmb{\Theta}}}
\newcommand{\SI}{\begin{indlist}}
\newcommand{\EI}{\end{indlist}}
\newcommand{\DL}{\begin{dashlist}}
\newcommand{\DLE}{\end{dashlist}}
\def\setboxz@h{\setbox\z@\hbox}
\def\wdz@{\wd\z@}
\def\boxz@{\box\z@}
\def\underset#1#2{\binrel@{#2}%
  \binrel@@{\mathop{\kern\z@#2}\limits_{#1}}}
\def\binrel@#1{\begingroup
  \setboxz@h{\thinmuskip0mu
    \medmuskip\m@ne mu\thickmuskip\@ne mu
    \setbox\tw@\hbox{$#1\m@th$}\kern-\wd\tw@
    ${}#1{}\m@th$}%
  \edef\@tempa{\endgroup\let\noexpand\binrel@@
    \ifdim\wdz@<\z@ \mathbin
    \else\ifdim\wdz@>\z@ \mathrel
    \else \relax\fi\fi}%
  \@tempa
}
\let\binrel@@\relax%
\def\BibTeX{{\rm B\kern-.05em{\sc i\kern-.025em b}\kern-.08em
		T\kern-.1667em\lower.7ex\hbox{E}\kern-.125emX}}
\begin{document}
	
	\title{
		\color{black} Beamforming in Hybrid RIS assisted Integrated Sensing and Communication Systems
	}
	
	\author{\IEEEauthorblockN{ R.S. Prasobh Sankar and  Sundeep Prabhakar Chepuri\\ }
		\IEEEauthorblockA{ {Indian Institute of Science,}
			Bangalore, India 
		}
	}
	\maketitle

	\begin{abstract}
		In this paper, we consider a hybrid reconfigurable intelligent surface~(RIS) comprising of active and passive elements to aid an integrated sensing and communication~(ISAC) system serving multiple users and targets. Active elements in a hybrid RIS include amplifiers and phase shifters, whereas passive elements include only phase shifters. We jointly design transmit beamformers and RIS coefficients, i.e., amplifier gains and phase shifts, to maximize the worst-case target illumination power while ensuring a desired signal-to-interference-plus-noise ratio for  communication links and constraining the RIS noise power due to the active elements. Since this design problem is not convex, we propose a solver based on alternating optimization  to design the transmit beamformers and RIS coefficients. Through numerical simulations, we demonstrate that the performance of the proposed hybrid RIS assisted ISAC system is significantly better than that of passive RIS assisted ISAC systems as well as ISAC systems without RIS even when only a small fraction of the hybrid RIS contains active elements.
		
	\end{abstract}
	
	\begin{IEEEkeywords}
		Active RIS,  hybrid RIS, integrated sensing and communication, reconfigurable intelligent surfaces, transmit beamforming. 
	\end{IEEEkeywords}
	
	\section{Introduction} \label{sec:intro}
	\ac{isac}  systems are envisioned to play a key role in 6G systems~\cite{liu2021ISAC_6G,wymeersch2021Integration_comm_sense_6G}. ISAC systems aim to establish reliable communication links with users while sharing the same spectral resources to simultaneously perform sensing tasks. The coexistence of communication and sensing functionalities is usually accomplished by using communication symbols for sensing, embedding communication data in radar waveforms, or by simultaneously transmitting precoded communication symbols and radar waveforms ~\cite{mishra2019towards_mmWave,liu2020joint_transmit_beamform}. Dual function radar communication is a type of an ISAC system with a common \ac{dfbs} carrying out both sensing and communication tasks. 
	
	Large bandwidth at mmWave frequencies enables higher data rates and precise positioning, thereby making the mmWave frequency band attractive for \ac{isac} systems~\cite{mishra2019towards_mmWave}. However, operating at mmWave frequencies is challenging due to the extreme pathloss, which often renders the \ac{NLoS} paths very weak.  \Acp{ris} are a promising technology, which can favorably modify the wireless propagation environment by introducing additional paths~\cite{renzo2020smart,basar2019wireless}. Typically, an \ac{ris} comprises of a fully passive array of phase shifters, which can be  remotely tuned to introduce certain phase shifts to the incident signal. Passive \acp{ris}, which are envisaged as one of the crucial technology for wireless communication systems~\cite{rajatheva2020white}, have unsurprisingly received significant research interest for sensing as well as ISAC systems~\cite{sankar2021mmWave_JRC,wang2021ris_radar, wang2021joint_waveform_design,liu2021joint,song2021JRC_RIS,he2022risassisted}. However, a common problem with passive \acp{ris} is the so-called  \textit{double fading}, where the effective pathloss of the link via the RIS is the product of pathlosses of the transmitter-RIS and RIS-receiver links. Hence, the improvement in communication or sensing performance due to the passive RIS is not significant  when the direct link itself is sufficiently strong.
	
	In contrast to passive \acp{ris}, \textit{active RISs} have phase shifters and reflection-type amplifiers that are capable of amplifying the incident signals~\cite{zhang2022active,dong2021active_ris_secure}.  Active \acp{ris} do not comprise of any radio frequency~(RF) chains and are different from systems with passive phase shifters replaced with fully active sensors or decode-and-forward relays~\cite{nguyen2021hybrid}.  Active \acp{ris} 
	offer remarkable improvement in performance when compared with traditional passive \acp{ris} for communications~\cite{zhang2022active,dong2021active_ris_secure}. However, active amplifier elements introduce additional noise, referred to as \textit{RIS noise}, into the system, and should be accounted for while designing systems with active \acp{ris}.
	
	In this work, we consider a \textit{hybrid RIS}, which comprises of both active and passive elements, but without any RF chains to decode or process impinging signals. In particular, we consider a hybrid RIS-assisted \ac{isac} system, wherein a \ac{dfbs} is communicating with multiple users while sensing multiple targets. Specifically, we design  transmit beamformers at the \ac{dfbs} and coefficients of the hybrid RIS to maximize the worst-case target illumination power at targets while ensuring a minimum~\ac{sinr} for the communication \acp{ue} and constraining the amount of RIS noise at targets. Since we consider a hybrid RIS,  modeling and design are different compared to existing works~\cite{wang2021joint_waveform_design,liu2021joint,song2021JRC_RIS,he2022risassisted}, which focus on designing passive phase shifters without RIS noise.
	
	Since this design problem is a non-convex optimization problem, we propose an alternating optimization procedure to compute the transmit beamformers and RIS coefficients. Subproblems pertaining to the design of beamformers for fixed RIS coefficients and vice versa are also non-convex. Therefore, we relax the subproblems and obtain semi-definite programs~(SDPs), which can be solved using off-the-shelf convex solvers. The impact of RIS noise is accounted for in both the subproblems. Through numerical simulations, we demonstrate that the performance of the proposed ISAC system with hybrid RIS is significantly better than that of passive RIS assisted \ac{isac} systems and \ac{isac} systems without RIS, even when the number of active elements in the RIS is as few as $10\%$ of the total number of elements, without significantly increasing the total power consumption of the ISAC system.

	\section{System model} \label{sec:system}
	
	Consider a hybrid RIS assisted ISAC system with~\ac{dfbs} sensing $T$ targets while simultaneously serving $K$ users by transmitting both radar and communication signals, but with separate beamformers. We consider a narrowband scenario and model the  \ac{dfbs}  as a~\ac{ula} with elements separated by half of the signal wavelength. We begin by presenting the downlink transmit signal model, followed by modeling the hybrid RIS.

	\begin{figure}
		\centering
		\includegraphics[width=8cm,height=6cm]{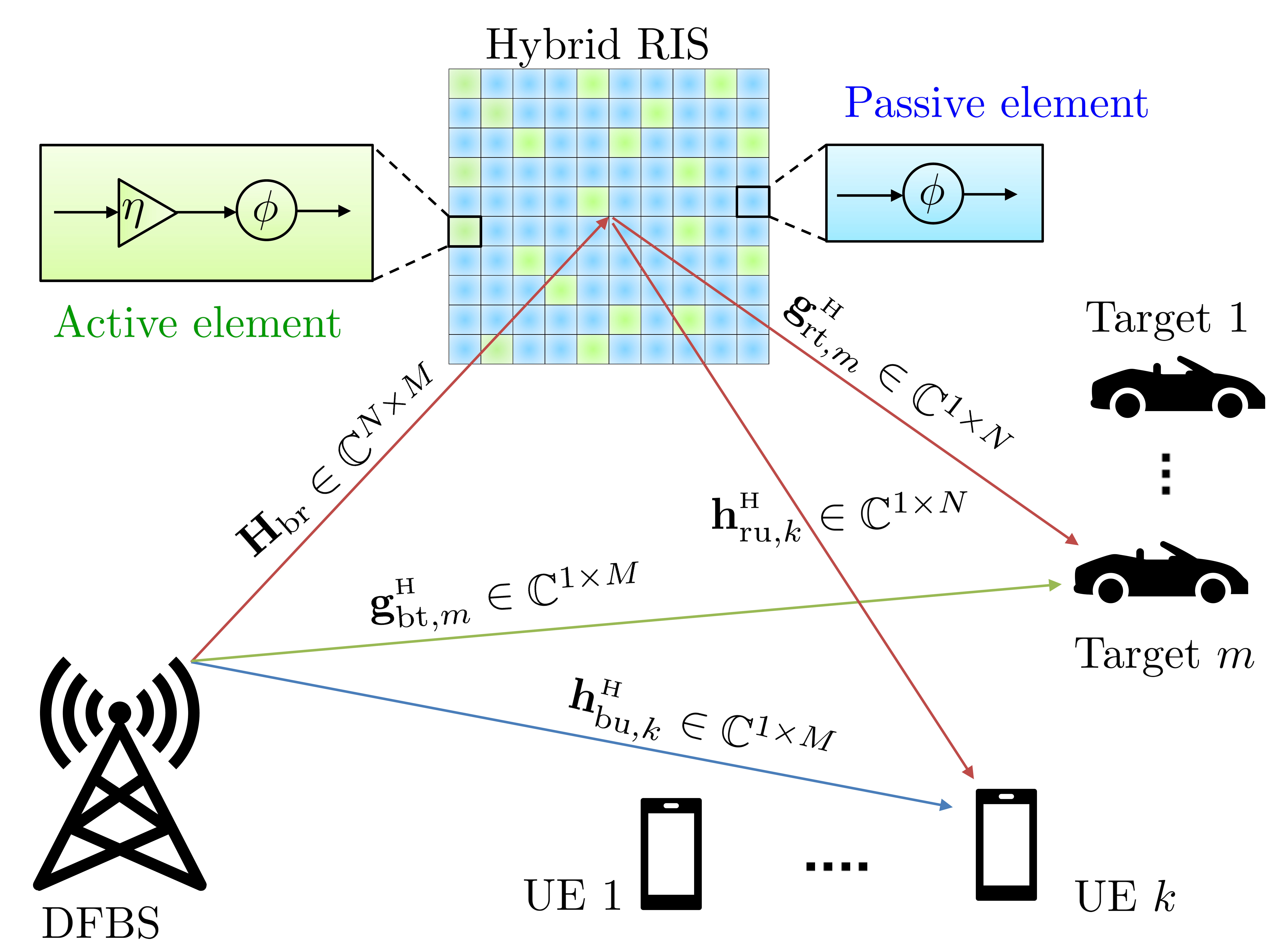}
		\caption{System model.} 
		\vspace{-0.5cm}
		\label{fig:sys_model}
	\end{figure}
	
	\subsection{Downlink transmit signal}
	Let $\vd = [d_1,\ldots,d_K]\rT \in \mbC^{K}$ and $\vt =[t_1,\ldots,t_M]\rT \in \mbC^{M}$ denote the discrete-time complex baseband signals used for communication and radar sensing, respectively. We further assume that these signals are uncorrelated with each other and have unit power, i.e., $\mbE \left[ \vd \vd \rH \right] = \mI_K$, $\mbE\left[ \vt \vt\rH \right] = \mI_M$, and $\mbE \left[ d_i t_j^* \right] = 0,~\forall~i,j$, where the expectation, $\mbE \left[\cdot\right]$, is  computed over different signal realizations. The \ac{dfbs} precodes $\vd$ using the communication beamformer $\mC = [\vc_1 ,\ldots ,\vc_K] \in \mbC^{M \times K}$ and precodes $\vt$ using the sensing beamformer $\mS =[\vs_1,\ldots,\vs_M] \in \mbC^{M \times M}$. Then the overall downlink transmit signal, $\vx$, is a superposition of the communication and radar symbols and is given by
	\begin{equation} \label{eq:tx_signal}
		\vx = \mC \vd + \mS \vt.
	\end{equation}
	The  corresponding transmit covariance matrix is denoted by $\mR = \mbE\left[ \vx \vx\rH\right]$.
	This signal reaches the users and targets via a direct channel and a reflected channel through the hybrid RIS, which we model next.
	\subsection{Hybrid RIS}
	Consider an $N$ element square-shaped hybrid RIS.
	Let $\boldsymbol{\omega} = [\omega_1,\ldots,\omega_N]\rT \in \mbC^{N}$ denote the RIS coefficients.  Without loss of generality, we assume that the first $L$ elements of the RIS are active with a maximum amplifier gain of $\eta$.  Let us denote the index set of active elements and passive elements as $\cA = \{1,\ldots,L\}$ and $\cP = \{L+1,\ldots,N\}$, respectively. Thus, we have $\vert \omega_i \vert \leq \eta$ for $i \in \cA$ and $\vert \omega_i \vert = 1$ for $i \in \cP$. Let $\mJ_{\rm a} \in \mbC^{N \times N}$ (respectively, $\mJ_{\rm p} \in \mbC^{N \times N}$ ) denote the selection matrix with first $L$ rows (respectively, last $N-L$ rows) being the same as that of the $N \times N$ identity matrix and zeros elsewhere. The output of the hybrid RIS for an input $\vu_{\rm in} \in \mbC^N$ is given by~\cite{zhang2022active,dong2021active_ris_secure} 
	\begin{equation} \label{eq:ris_out}
		\vu_{\rm out} = {\rm diag}(\boldsymbol{\omega}) \vu_{\rm in} + \mJ_{\rm a} {\rm diag}(\boldsymbol{\omega})\vn,
	\end{equation} 
	where $\vn \sim \cC\cN(\boldsymbol{0}, \nu^2 \mI)$ is the RIS noise due to the active elements. The total output power of the active elements is  $P_{\rm ris} =  \mbE \left[ \Vert \mJ_{\rm a} {\rm diag}(\boldsymbol{\omega}) \left( \vu_{\rm in} + \vn \right)  \Vert^2 \right]  $. Here, we re-emphasize that only the active elements result in RIS noise. 
	\subsection{Channel model}
	The communication channels involving the hybrid RIS are modeled as Rician fading channels with a Rician factor $\rho$, whereas the direct channels between the DFBS and the users are modeled as Rayleigh fading channels. All the other channels involving targets are modeled as~\ac{LoS} links. The system model with different underlying channels  is presented in Fig.~\ref{fig:sys_model}.
	
	\section{Problem statement}
	The communication performance of a multi-user MIMO system in terms of data rate or spectral efficiency is determined by the~\ac{sinr} at the user. Similarly, the sensing performance of a radar system is determined by the \ac{snr} at the radar receiver, which in turn predominantly depends on the target illumination power. Furthermore, the use of active elements in the RIS results in additional noise at the output of the RIS that is transmitted towards the  targets and users, thereby adversely affecting the radar and communication \ac{sinr}. In this work, we propose to choose the beamformers and RIS coefficients to maximize the worst-case target illumination power while guaranteeing a minimum \ac{sinr} for the communication users and constraining the RIS noise.
	
	Let $y_k = \vh_k\rH \vx +  \vh_{{\rm ru},k}\rH \mJ_{\rm a}{\rm diag}(\boldsymbol{\omega}) \vn + w_k$ denote the signal received at the $k$th UE, where $\vh_k = \vh_{ {\rm bu},k} + \vh_{{\rm ru},k} {\rm diag}(\boldsymbol{\omega})\mH_{\rm br}$ (definitions of channel matrices are provides in Fig.~\ref{fig:sys_model}) and $w_k \sim \cC\cN(0,\sigma^2)$ is the AWGN at the receiver. Using~\eqref{eq:tx_signal}, SINR of the $k$th user can be computed as
	\begin{equation*} 
		\gamma_k = \frac{  \vert \vh_k\rH \vc_k \vert^2  }{ \sum_{j=1, j\neq k}^{K} \vert \vh_k \rH \vc_j \vert^2 + \sum_{m=1}^{M} \vert \vh_k\rH \vs_m \vert^2 + z_k(\boldsymbol{\omega}) + \sigma^2 },
	\end{equation*}
	where  $z_k(\boldsymbol{\omega})$ is the RIS noise power at the $k$th UE, given by $z_k(\boldsymbol{\omega})  =    \mbE \left[ \vert \vh_{{\rm ru},k}\rH \mJ_{\rm a}{\rm diag}(\boldsymbol{\omega}) \vn  \vert^2\right]  =  \Vert \vh_{ {\rm ru},k} \rH \mJ_{\rm a} {\rm diag}(\boldsymbol{\omega}) \Vert^2 \nu^2$.

	For the $m$th target, the target illumination power arising from the signal transmitted by the DFBS is given by
	\begin{equation} \label{eq:target_spow}
		p_m(\mR,\boldsymbol{\omega}) = \mbE \left[\vert \vg_m \rH \vx \vert^2 \right] = \vg_m\rH \mR\vg_m = {\rm Tr}\left( \mR \vg_m \vg_m\rH \right),
	\end{equation}
	where $\vg_m\rH = \vg_{{\rm bt}, m}\rH + \vg_{{\rm rt}, m}\rH {\rm diag}(\boldsymbol{\omega})\mH_{\rm br}$ is the overall \ac{dfbs}-\ac{ris}-target channel vector. The RIS noise at the $m$th target is given by 
	\begin{equation} \label{eq:target_npow}
		r_m(\boldsymbol{\omega}) =   \Vert \vg_{ {\rm rt},m} \rH \mJ_{\rm a} {\rm diag}(\boldsymbol{\omega}) \Vert^2 \nu^2.
	\end{equation} 
	We design the transmit beamformers $\mC$ and $\mS$ (alternatively, the transmit covariance matrix $\mR$, which is positive semidefinite) and the coefficients of the hybrid RIS $\boldsymbol{\omega}$ while satisfying a total transmit power constraint of $P_t$ at the \ac{dfbs} and $P_{\rm max}$ at the RIS. Let $r_{\rm max}$ be the RIS noise power constraint for all targets and $\Gamma$ be the minimum SINR required for the UEs.  We can then state the  design problem as 
	\begin{subequations}  \label{problem_formulation}
		\begin{align}
			\quad \underset{\mS,\mC,\boldsymbol{\omega}}{\text{max}} & \quad \underset{m}{\text{min}} \quad p_m(\mR,\boldsymbol{\omega})  \nonumber \\ 
			\text{subject to} & \quad \mR = \mC\mC\rH + \mS\mS\rH \succcurlyeq \boldsymbol{0}  \nonumber \\ 
			& \quad \gamma_k(\mR,\boldsymbol{\omega}) \geq \Gamma, \quad k=1,2,\ldots,K  \label{problem_formulation_sinr}\\
			& \quad {\rm Tr}(\mR) \leq P_t  \label{problem_formulation_pow}\\
			& \quad r_m(\boldsymbol{\omega}) \leq r_{\rm max}, \quad m=1,\ldots, T \label{problem_formulation_tgt_ris_pow} \\		
			& \quad \vert \omega_{i} \vert \leq \eta,  i \in \cA, \quad \vert \omega_{i} \vert = 1, i \in \cP \label{problem_ris_constraint} \\
			& \quad P_{\rm ris}(\mR,\boldsymbol{\omega}) \leq P_{\rm max} \label{problem_ris_pow},
		\end{align}
	\end{subequations}
	where~\eqref{problem_formulation_sinr} is the SINR constraint for the users,~\eqref{problem_formulation_pow} is due to the total transmit power constraint at the \ac{dfbs},~\eqref{problem_formulation_tgt_ris_pow} ensures that the RIS noise power at each target is bounded, and~\eqref{problem_ris_pow} is the total power constraint at the RIS. Here,~\eqref{problem_ris_pow} also determines the required capacity of the power source at the RIS.
	Due to the total transmit power constraint at the \ac{dfbs} and limited gain of the active elements, the total power at the output of the RIS is upper bounded.  Specifically, the upper bound is provided in the next theorem. 
		\begin{theorem}
			The total power constraint on the active elements of the RIS is upper bounded as
			\begin{equation} \label{eq:lemma1}
				P_{\rm ris} \leq	L\eta^2 \left[ \zeta_{\rm br} P_t \left( \frac{\rho M + 1 }{\rho + 1}\right) + \nu^2 \right],
			\end{equation}
			where $\zeta_{\rm br}$ is the pathloss of the DFBS-RIS link.
		\end{theorem}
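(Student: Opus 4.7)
The plan is to expand $P_{\rm ris}$ directly from the hybrid RIS model~\eqref{eq:ris_out} with $\vu_{\rm in}=\mH_{\rm br}\vx$, separate the signal and RIS-noise contributions by independence, and then bound each piece using the amplifier constraint $|\omega_i|\le\eta$ on the $L$ active indices together with the transmit power budget ${\rm Tr}(\mR)\le P_t$ of~\eqref{problem_formulation_pow} and the Rician statistics of $\mH_{\rm br}$.

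First I would observe that $\mJ_{\rm a}$ is an idempotent selection matrix commuting with ${\rm diag}(\boldsymbol{\omega})$, so the output power of the active elements decouples as
\begin{equation*}
P_{\rm ris}=\mbE\!\left[\left\Vert\mJ_{\rm a}{\rm diag}(\boldsymbol{\omega})\bigl(\mH_{\rm br}\vx+\vn\bigr)\right\Vert^2\right]=\sum_{i\in\cA}|\omega_i|^2\Bigl(\mbE\bigl[|\ve_i\rH\mH_{\rm br}\vx|^2\bigr]+\nu^2\Bigr),
\end{equation*}
where $\ve_i$ is the $i$th standard basis vector; the signal/noise cross term vanishes because $\vn$ is zero-mean and independent of both $\vx$ and $\mH_{\rm br}$. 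Applying $|\omega_i|^2\le\eta^2$ on $\cA$ immediately bounds the noise contribution by $L\eta^2\nu^2$, so it remains to establish $\mbE[|\ve_i\rH\mH_{\rm br}\vx|^2]\le\zeta_{\rm br}P_t(\rho M+1)/(\rho+1)$ for every $i\in\cA$.

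For this inner bound I would decompose the channel as $\mH_{\rm br}=\sqrt{\zeta_{\rm br}}\bigl[\sqrt{\rho/(\rho+1)}\,\bar{\mH}_{\rm br}+\sqrt{1/(\rho+1)}\,\tilde{\mH}_{\rm br}\bigr]$, with rank-one LoS component $\bar{\mH}_{\rm br}=\va_{\rm ris}\va_{\rm bs}\rH$ built from unit-modulus steering vectors (so $\Vert\va_{\rm bs}\Vert^2=M$ and $|\ve_i\rH\va_{\rm ris}|=1$) and $\tilde{\mH}_{\rm br}$ having i.i.d.\ $\cC\cN(0,1)$ entries independent of $\vx$. Taking expectation, the cross Rician term vanishes by the zero-mean property of $\tilde{\mH}_{\rm br}$, the LoS piece gives $\zeta_{\rm br}\rho/(\rho+1)\cdot\va_{\rm bs}\rH\mR\va_{\rm bs}$, and the NLoS piece gives $\zeta_{\rm br}/(\rho+1)\cdot{\rm Tr}(\mR)$ (using $\mbE[\tilde{\mH}_{\rm br}\rH\ve_i\ve_i\rH\tilde{\mH}_{\rm br}]=\mI_M$ and $\mbE[\vx\vx\rH]=\mR$). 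Bounding the LoS term via Rayleigh--Ritz as $\va_{\rm bs}\rH\mR\va_{\rm bs}\le\Vert\va_{\rm bs}\Vert^2\lambda_{\max}(\mR)\le M\,{\rm Tr}(\mR)$ and invoking ${\rm Tr}(\mR)\le P_t$, summing the $L$ identical bounds over $i\in\cA$, and adding the $L\eta^2\nu^2$ noise term yields exactly~\eqref{eq:lemma1}.

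The main obstacle is identifying the tight bound on the LoS contribution: one must recognise that the coherent beamforming gain $\va_{\rm bs}\rH\mR\va_{\rm bs}\le M\,{\rm Tr}(\mR)$ is what produces the $\rho M$ factor in the numerator, distinguishing it from the Rayleigh case where only ${\rm Tr}(\mR)$ appears. Once that is in place, the remaining manipulations are routine expansions of independent zero-mean products and a direct application of $|\omega_i|\le\eta$.
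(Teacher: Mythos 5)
Your proposal is correct and follows essentially the same route as the paper: decouple $P_{\rm ris}$ into per-active-element powers $|\omega_i|^2 q_i$, split $q_i$ into LoS and NLoS contributions via the Rician statistics, bound the coherent LoS term $\va\rH\mR\va \le M P_t$ and the NLoS term by ${\rm Tr}(\mR)\le P_t$, and finish with $|\omega_i|\le\eta$ summed over the $L$ active elements. Your argument is merely more explicit about the Rayleigh--Ritz step and the vanishing cross terms, which the paper leaves implicit.
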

			\begin{proof} 
			 The output power at the $i$th active RIS element for the incident signal $\vu_{\rm in} = \mH_{\rm br}\vx$ is given by $q_i \overset{\Delta}{=}\mbE_{\vx, \vh_{{\rm br},i}}\left[ \vert \vh_{{\rm br},i}\rH \vx \vert^2 \right] + \nu^2$, where $\vh_{{\rm br},i}\rH$ denotes the $i$th row of $\mH_{\rm br}$. Let $\va(\theta_{\rm br}) \in \mbC^{M}$ be the array response vector at the DFBS towards the direction of RIS, $\theta_{\rm br}$, with $\Vert \va\left(\theta_{\rm br} \right)\Vert^2 = M$. From the assumed channel statistics, we have
				\begin{equation}
				q_i = {\zeta_{\rm br}} \left\{  \mbE_{\vx} \left[ \frac{\rho}{\rho + 1} \vert \va\rH(\theta_{\rm br})\vx \vert^2  + \frac{1}{\rho + 1} {\rm Tr}\left(\vx \vx\rH \right) \right]  \right\} + \nu^2, \nonumber 
				\end{equation}
			where the first term corresponds to the \ac{LoS} path and the second term corresponds to the \ac{NLoS} path. We have $\mbE_{\vx} \left[\vert  \va\rH( \theta_{\rm br}) \vx \vert^2 \right] = \va\rH( \theta_{\rm br})\mR \va \left(\theta_{\rm br} \right) \leq  P_tM$ leading to the upper bound $q_i \leq {\zeta_{\rm br}}P_t \left( \frac{ \rho M +1 }{\rho +1}\right) + \nu^2$. The total output power of the active RIS elements is $P_{\rm ris} = \sum_{i=1}^{L} \vert \omega_{i} \vert^2 q_i $. By
			replacing $q_i$ with its upper bound and from~\eqref{problem_ris_constraint}, we obtain~\eqref{eq:lemma1}.
			\end{proof}
Hence, if we choose $P_{\rm max}$ to be larger than the upper bound,~\eqref{problem_ris_pow} will be redundant and can be ignored. From now on, we ignore~\eqref{problem_ris_pow} in the design. We now proceed to the design of the transmit beamformers and RIS coefficients.

\section{Proposed solver} \label{sec:design}
	The optimization problem in~\eqref{problem_formulation} is non-linear and non-convex due to the quadratic inequality constraints and the interdependence between the optimization variables, rendering a joint solution difficult. Hence, we solve~\eqref{problem_formulation} in an alternating manner, where we optimize the beamformers for a fixed choice of the RIS coefficients and vice versa till convergence.
	
		\begin{figure*}[ht]
		\begin{subfigure}[c]{0.48\columnwidth}\centering
			\includegraphics[width=\columnwidth]{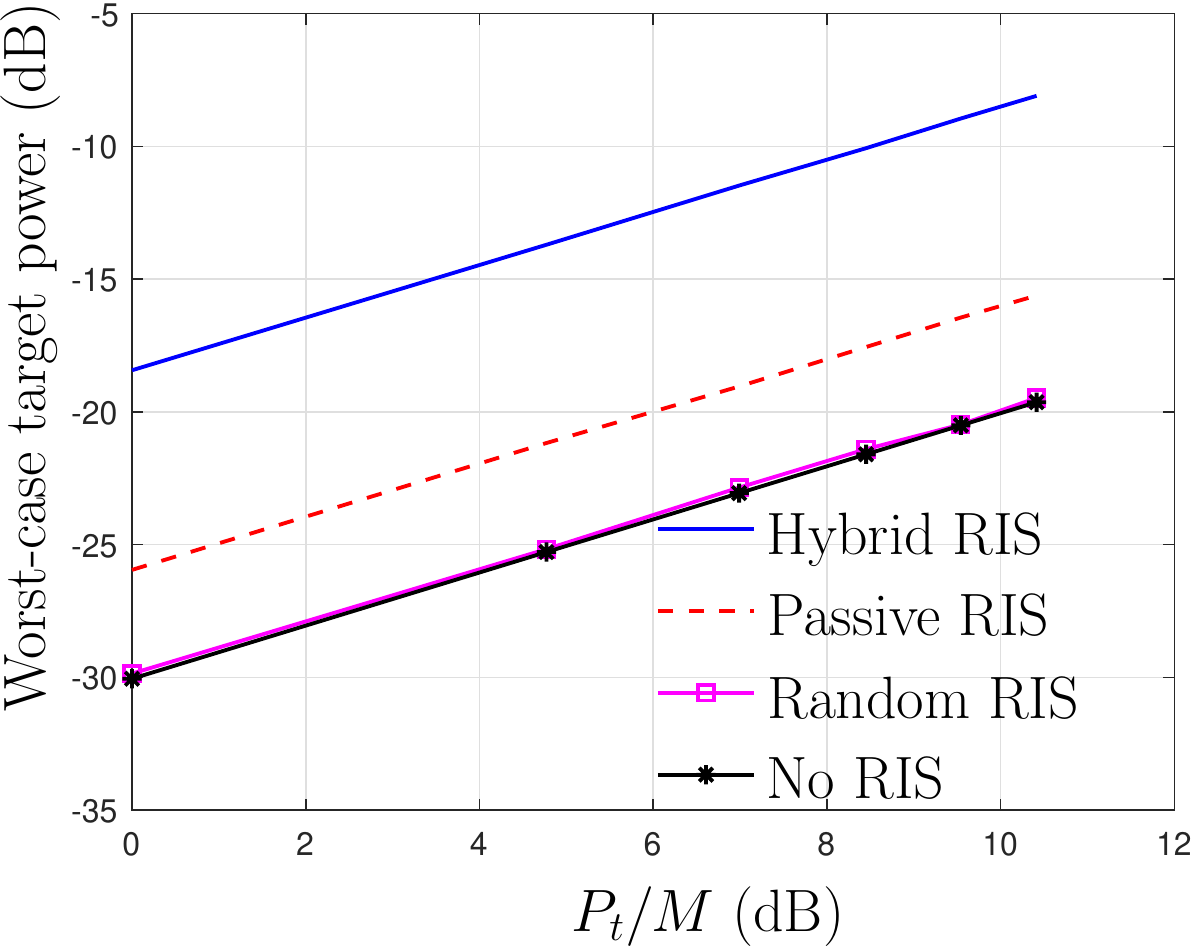}
			\caption{}
			\label{fig:f1:tgt_pow_v_Pt}
		\end{subfigure}
		~
		\begin{subfigure}[c]{0.48\columnwidth}\centering
			\includegraphics[width=\columnwidth]{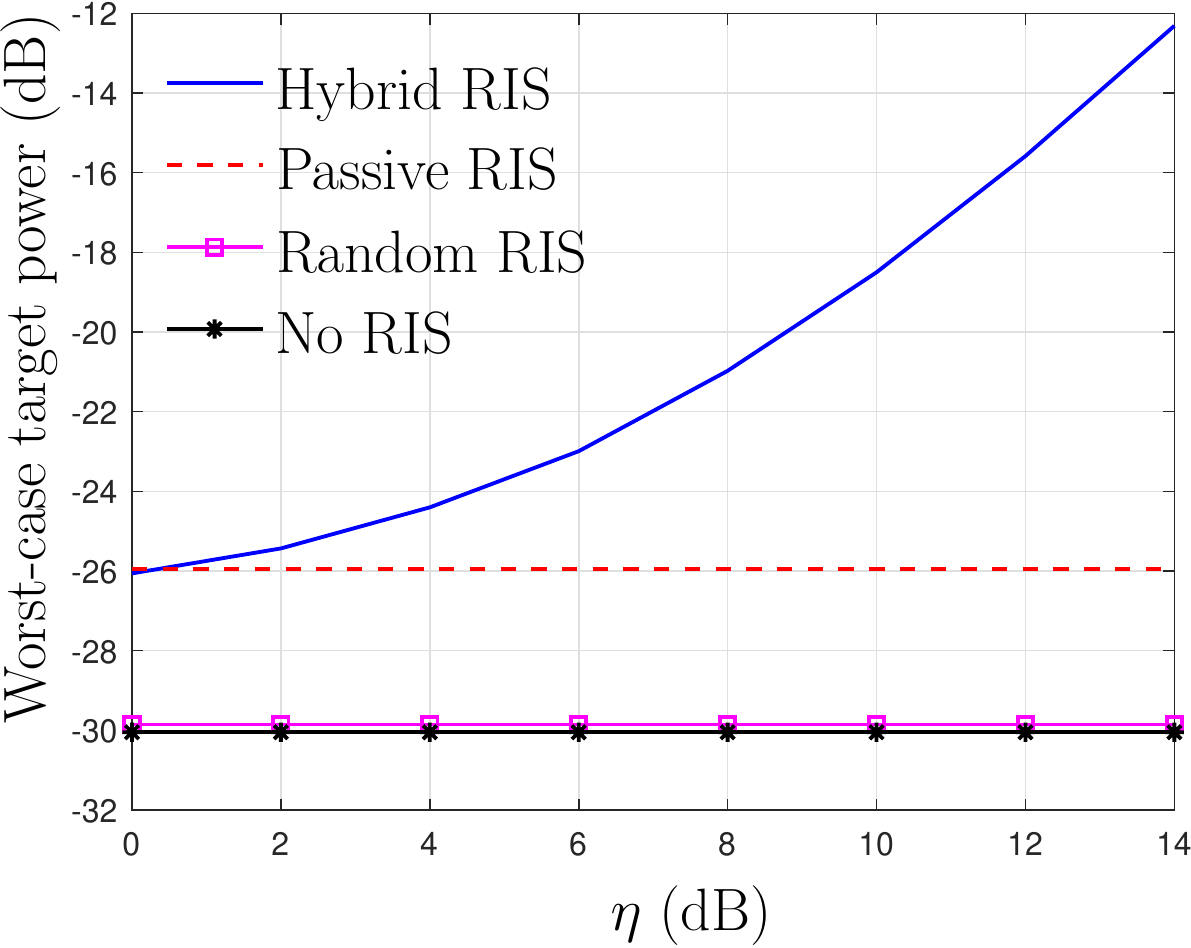}
			\caption{}
			\label{fig:f3:tgt_pow_v_eta}
		\end{subfigure}
		~
		\begin{subfigure}[c]{0.48\columnwidth}\centering
			\includegraphics[width=\columnwidth]{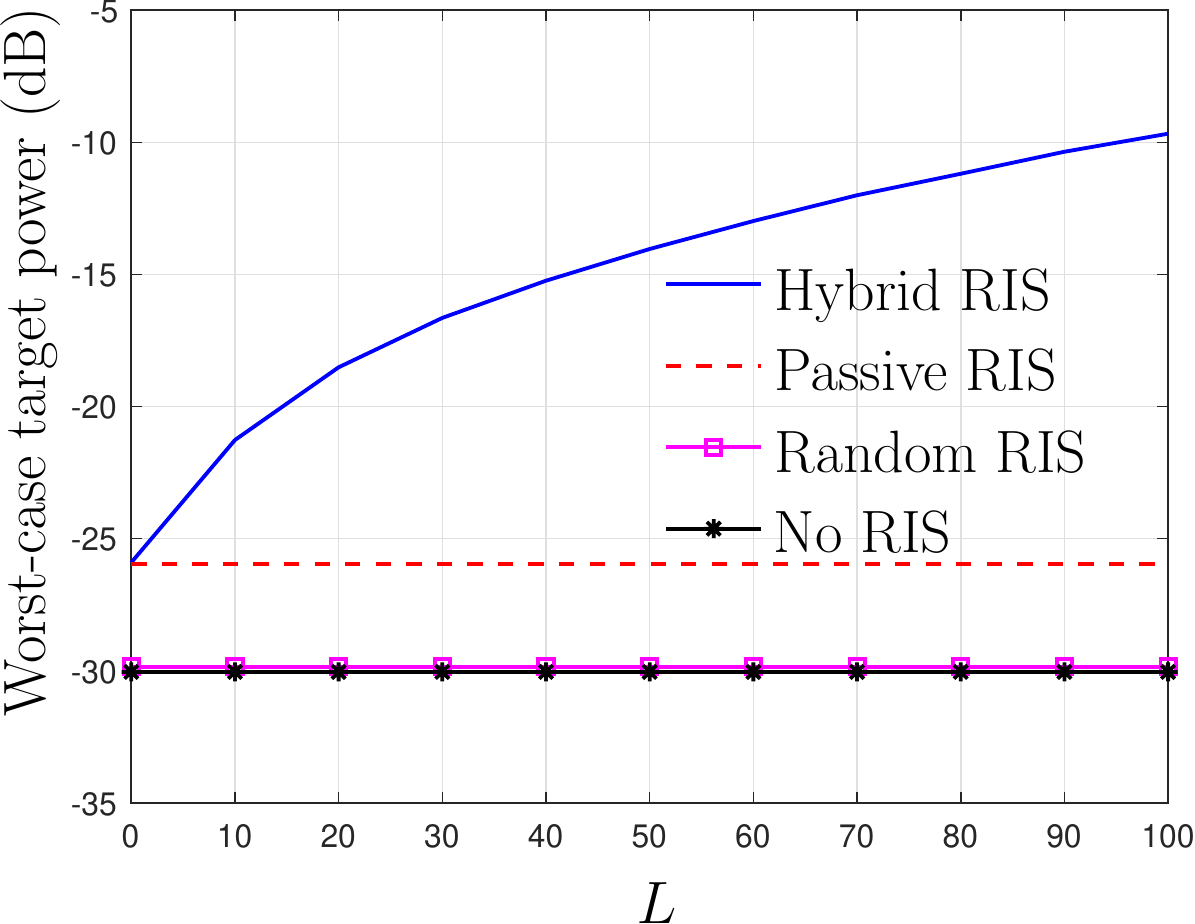}
			\caption{}
			\label{fig:f4:tgt_pow_v_L}
		\end{subfigure}
		~
		\begin{subfigure}[c]{0.48\columnwidth}\centering
			\includegraphics[width=\columnwidth]{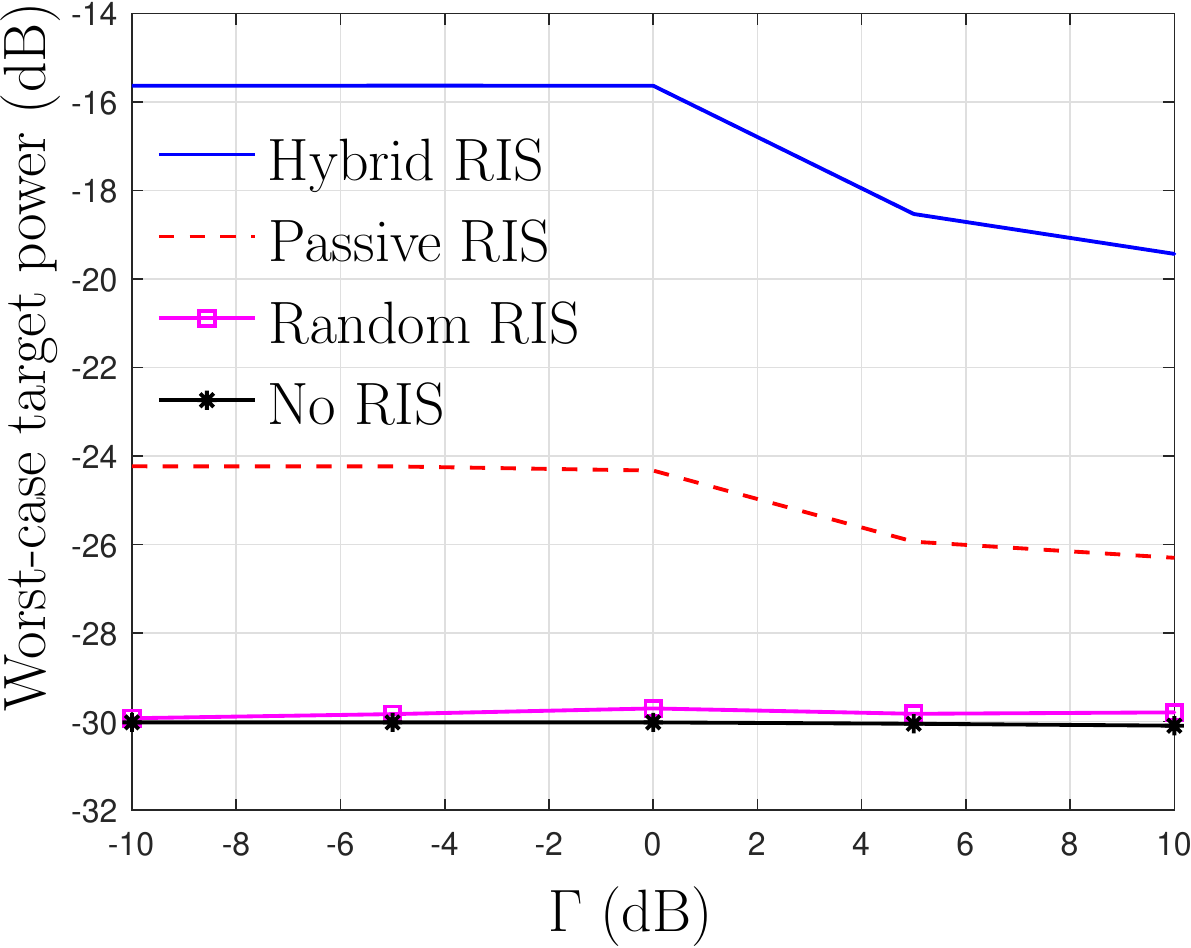}
			\caption{}
			\label{fig:f2:Ptotal_v_gamma}
		\end{subfigure}
		\caption{\small Worst-case target illumination power: (a) Impact of $P_t$ with $\eta = 10~{\rm dB}$ and $L = 20$, (b) Impact of $\eta$ with $P_t = M$ and $ L = 20$, (c) Impact of $L$ with $\eta = 10~{\rm dB}$ and $ P_t =  M$. (d) Impact of $\Gamma$ with $P_t = M, L = 20$, and $ \eta = 10$. We use $\Gamma = 5$~{\rm dB} and $r_{\rm max} = -90~{\rm dB}$.}
		\label{fig:simulations}
	\end{figure*}
	
	\vspace{-0.2cm}
	\begin{figure*} [hb]
		\rule{2\columnwidth}{0.1pt}
		\begin{equation} \label{eq:T_k_def}
			\vg_{{\rm c},mj}  = \begin{bmatrix}
				{\rm diag}\left( \vg_{ {\rm rt},m}\rH \right)\mH_{\rm br} \vc_j \\ \vg_{{\rm bt}, m}\rH \vc_j	\end{bmatrix}, \quad  \vg_{ {\rm r},mn } =  \begin{bmatrix}
				{\rm diag}\left( \vg_{ {\rm rt},m}\rH \right)\mH_{\rm br} \vs_n \\ \vg_{{\rm bt}, m}\rH \vs_n
			\end{bmatrix}, \quad 	\mT_m = \sum_{j=1}^{K} \vg_{{\rm c},mj} \vg_{{\rm c},mj}\rH + \sum_{n=1}^{M} \vg_{ {\rm r},mn } \vg_{ {\rm r},mn }\rH.
		\end{equation}

		\begin{equation} \label{eq:E_m_def}
			\vh_{{\rm c},kj} = \begin{bmatrix}
				{\rm diag}\left( \vh_{ {\rm ru},k}\rH \right)\mH_{\rm br} \vc_j \\ \vh_{{\rm bu}, k}\rH \vc_j
			\end{bmatrix}, 
			\quad  \vh_{ {\rm r},km } =  \begin{bmatrix}
				{\rm diag}\left( \vh_{ {\rm ru},k}\rH \right)\mH_{\rm br} \vs_m \\ \vh_{{\rm bu}, k}\rH \vs_m
			\end{bmatrix}, \quad  \mE_m = \nu^2 \begin{bmatrix}
				{\rm diag}\left( \vg_{ {\rm rt},m}\rH \right) 	{\rm diag}\left( \vg_{ {\rm rt},m} \right) & \boldsymbol{0} \\
				\boldsymbol{0}\rH & 0
			\end{bmatrix}.	
		\end{equation}	
		
		\begin{equation} \label{eq:B_k_def}
		\mA_k = \vh_{{\rm c},kk} \vh_{{\rm c},kk}\rH,  \quad
			\mB_k = \sum_{j=1,j\neq k}^{K} \vh_{{\rm c},kj} \vh_{{\rm c},kj}\rH + \sum_{m=1}^{M} \vh_{ {\rm r},km } \vh_{ {\rm r},km }\rH + \nu^2 \begin{bmatrix}
				{\rm diag}\left( \vh_{ {\rm ru},k} \rH \right) {\rm diag}\left( \vh_{ {\rm ru},k}  \right)  & \boldsymbol{0}\\
				\boldsymbol{0}\rH & 0
			\end{bmatrix}.
		\end{equation}

	\end{figure*}

\subsection{Updating $\mC$ and $\mS$ given $\boldsymbol{\omega}$} \label{sec:update_tx}

Let us define rank-1 matrices $\mC_k = \vc_k \vc_k \rH$ for $k=1,\ldots,K $ so that $\mC\mC\rH = \sum_{k=1}^{K}\mC_k$ and $\mS\mS\rH = \mR - \left( \sum_{k=1}^{K} \mC_k\right)$.{ \color{black} We can express the SINR constraints in~\eqref{problem_formulation_sinr} as~\cite{liu2020joint_transmit_beamform}
		\begin{equation}
			\left( 1 + \Gamma^{-1}\right) \vh_k \rH \mC_k \vh_k  \geq \vh_k \rH \mR \vh_k + z_k(\boldsymbol{\omega}) + \sigma^2 
	\end{equation}
	for $k=1,\ldots,K$.  } Then the beamformer design subproblem becomes

		\begin{align}
			\underset{\mR ,\mC_1,\ldots,\mC_K }{\text{max}}  \quad & \underset{m}{\text{min}} \quad {\rm Tr}\left( \mR \vg_m \vg_m \rH \right)   \nonumber \\ 
			\text{subject to} \quad  & \mC_k\succcurlyeq \boldsymbol{0}, \quad {\rm rank}\left(\mC_k \right) = 1   \label{sp1_rank_pow} \\
			& \left( 1 + \Gamma^{-1}\right) \vh_k \rH \mC_k \vh_k \geq \vh_k \rH \mR \vh_k  \nonumber\\
			&\quad \quad \quad \quad \quad \quad + z_k(\boldsymbol{\omega}) + \sigma^2, \quad k=1,\ldots,K  \nonumber \\
			& \quad \mR \succcurlyeq \boldsymbol{0}, \bigskip {\rm Tr}(\mR) \leq P_t, \bigskip \left( \mR  - \sum_{k=1}^{K} \mC_k \right)   \succcurlyeq \boldsymbol{0}.  \nonumber 		
		\end{align}
On dropping the non-convex rank constraints, the above problem reduces to an SDP, which can be efficiently solved using off-the-shelf solvers.	To recover rank-1 matrices $\mC_k$ from the SDP solution, we adopt the procedure from~\cite{liu2020joint_transmit_beamform} as described next. Suppose $\hat{\mR}$ and $\hat{\mC}_k,~k=1,\ldots,K$,  is the solution to the relaxed problem. We construct the beamformers by introducing the rank-1 matrices
	\begin{equation} \label{eq:ck_construct}
		\tilde{\vc}_k = \frac{\hat{\mC}_k\vh_k}{\sqrt{ \vh_k \rH \hat{\mC}_k \vh_k }}; \quad \tilde{\mC}_k = \tilde{\vc}_k \tilde{\vc}_k\rH,
	\end{equation}
	{\color{black} which satisfies  $\vh_k\rH\tilde{\mC}_k\vh_k = \vh_k\rH \hat{\mC}_k\vh_k$ $\text{for}\quad k=1,\ldots,K$.  Since $z_k(\boldsymbol{\omega})$ and $\sigma^2$ are known constants independent of the beamformers, \eqref{eq:ck_construct} also satisfies the SINR constraints.}
	We then use the Cholesky decomposition to obtain the sensing beamformers as
	\begin{equation} \label{eq:s_construct}
		\left( \hat{\mR} - \sum_{k=1}^{K} \tilde{\mC}_k \right) = \tilde{\mS} \tilde{\mS}\rH.
	\end{equation}

	{\color{black} Using the Cauchy-Schwartz inequality, we can show that $\left( \hat{\mR}  - \sum_{k=1}^{K} \tilde{ \mC_k } \right)   \succcurlyeq \boldsymbol{0} $.  Thus, in other words, solution obtained using~\eqref{eq:ck_construct} and~\eqref{eq:s_construct} is a feasible solution to the considered optimization problem~\eqref{sp1_rank_pow} whose  objective function is solely determined by $\mR$. Hence, the value of objective function with the constructed solution will be same as that of the relaxed solution. Before concluding this section, we re-emphasize that the constraints~\eqref{problem_formulation_tgt_ris_pow} and \eqref{problem_ris_constraint} that solely depend on the RIS coefficients are not relevant for the beamformer design subproblem since the RIS coefficients are fixed. 	

 }

	\subsection{Updating $\boldsymbol{\omega}$  given $\mC$ and $\mS$} \label{sec:update_ris}
	We now update the hybrid RIS coefficients while fixing the beamformers. Unlike the design procedure involving passive RIS, the presence of the RIS noise terms and the total power constraint on the RIS elements complicate the design of RIS coefficients. While the total power constraint for the RIS~\eqref{problem_ris_pow} can be redundant, the RIS noise power terms need to be accounted for in the design of RIS coefficients.
	
	To begin with, we express the  received power at the targets as a quadratic function of RIS coefficients. To do that, let us define the variable $\vv\rH = [\boldsymbol{\omega}\rT , 1] \in \mbC^{1 \times (N+1)}$. Then, the signal power received at the $m$th target can be written as $p_m = \vv\rH \mT_m \vv = {\rm Tr}(\mT_m \vv \vv\rH)$, where $\mT_m$ is given by~\eqref{eq:T_k_def}. Similarly, we define matrices $\mE_m$ in~\eqref{eq:E_m_def}, $\mA_k$ and $\mB_k$ in~\eqref{eq:B_k_def}, and express the RIS coefficient design  subproblem as
	\begin{subequations} \label{sp2}
		\begin{align} 
			\quad	\underset{\mV = \vv\vv\rH }{\text{max}} \quad \underset{m}{\text{min}}& \quad {\rm Tr}\left( \mT_m \mV \right) \nonumber  \\ 
			{\rm Tr}\left(  \left(\mA_k - \Gamma \mB_k \right)\mV \right) &\geq \Gamma \sigma^2, \quad k=1,\ldots, K \label{sp2_sinr}\\
			{\rm Tr}\left( \mE_m \mV \right) &\leq r_{\rm max}, \quad m=1,\ldots,T  \label{sp2_tgtnoise}\\
			\quad \vert [\mV]_{i,i} \vert \leq \eta,  i \in \cA, & \quad  \vert [\mV]_{i,i} \vert = 1, i \in \cP. \label{sp2_ris}   
		\end{align} 
	\end{subequations}
	We can re-write the quadratic equality $\mV = \vv \vv\rH $ equivalently as $\mV \succcurlyeq \boldsymbol{0}$ along with ${\rm rank}\left(\mV \right)=1$. By dropping the rank constraint, we get an SDP, which can be solved using off-the-shelf solvers. The required rank-1 solution for the RIS coefficients can be obtained using Gaussian randomization~\cite{luo2010sdr}. Specifically, let $\mV_{\rm opt}$ be the solution to the relaxed version of~\eqref{sp2} and $\tilde{\mV}_{\rm opt}$ be the $N\times N$ top left submatrix of $\mV_{\rm opt}$. We draw multiple realizations of Gaussian random vectors $\vu_m \sim \cC\cN(\boldsymbol{0},\tilde{\mV}_{\rm opt})$ and normalize elementwise to satisfy~\eqref{sp2_ris}, and choose a subset of these realizations, say, $\{{\vu_m}\}_{m=1}^{N_{\rm rand}}$ satisfying~\eqref{sp2_sinr} and~\eqref{sp2_tgtnoise}. The RIS reflection coefficients are then obtained by selecting the realization that results in the maximum worst-case target illumination power.
	
	To summarize, we repeat the following steps till convergence: (a) fix RIS coefficients and update beamformers using~\eqref{eq:ck_construct} and~\eqref{eq:s_construct} and (b) update RIS coefficients by solving~\eqref{sp2} followed by Gaussian randomization. This completes the design of beamformers and combiners.
	
	\section{Numerical experiments} \label{sec:numerical_experiments}
	In this section, we present numerical simulations to demonstrate the performance of the proposed algorithm. Throughout the simulations, we use $M=16$, $N=100$, $\Gamma = 5$~dB, $K=2$, $T = 4$, $L=20$, and $\eta = 10~{\rm dB}$, unless otherwise mentioned. The \ac{dfbs} and \ac{ris} are  located at $(0,0,0)$~m and $(10,-8,5)$~m, respectively. The user and target locations are randomly generated from a $10\times10~{\rm m}^2$ rectangular area with bottom left corner located at $(5,-2,0)$~m. The pathloss for the direct links and links involving the RIS are modeled as $
	30 + 22 \log_{10} (d)~{\rm dB}$ and $
	30 + 35 \log_{10} (d)~{\rm dB}$, respectively, where $d$ is the distance between the terminals in m. The Rician factor $\rho$ is set to $10$. We set $\sigma^2 = -94~{\rm dBm}$ and $\nu^2 = -60~{\rm dBm}$ and use $10$ iterations of the proposed algorithm. All plots are obtained by averaging over 100 independent wireless channel realizations.
	
	We compare the performance of the proposed method with three benchmark schemes: (a) \texttt{passive RIS}, which is a special case of the proposed method with $L=0$, (b) \texttt{random RIS}, where we design the beamformers $\mS$ and $\mC$ for a random choice of passive RIS phase shifts $\boldsymbol{\omega}_c$, and (d) \texttt{No RIS}, which is an ISAC system without any RIS. We simulate \texttt{No RIS} scenario by setting RIS phase shifts to zero.  For comparison, the same cost function and constraints are used throughout the benchmark schemes, and we use the worst-case target illumination power as the (radar) performance metric.

	In Fig.~\ref{fig:simulations}(a), we present the performance of different systems by varying $P_t$. We can clearly observe that the hybrid-RIS-assisted ISAC system with an amplification factor of $\eta=10~{\rm dB}$ and just $L=20$ active elements significantly outperforms both passive RIS assisted as well as ISAC systems without RIS. Furthermore, the performance of \texttt{random RIS} is comparable to that of \texttt{No RIS} scenario, thereby demonstrating the need for an appropriate design to completely benefit from the RIS. The performance of all methods improves when we have a higher transmit power.
	
	Impact of amplification factor (i.e.,~$\eta$) and the number of active elements of the hybrid RIS~(i.e., $L$) on the performance is presented in Fig.~\ref{fig:simulations}(b) and Fig.~\ref{fig:simulations}(c), respectively. Performance of the hybrid RIS assisted ISAC system improves with an increase in $\eta$ as well as $L$ and can be attributed to two phenomena. Firstly, a  larger value of $\eta$ leads to an increase in the strength of paths due to increased amplification. Secondly, increasing $L$ leads to an increase in the degrees of freedom in the RIS coefficient design since more number of elements can now take values that are not constrained to be on the unit circle, leading to improved beamforming performance and possibly stronger amplification. As before, \texttt{hybrid RIS} is found to remarkably outperform all benchmark schemes.
	
	Impact of SINR~(i.e.,~$\Gamma$) on the radar performance is illustrated in Fig.~\ref{fig:simulations}(d). As $\Gamma$ increases, the radar performance decreases since a higher amount of power needs to be transmitted towards the users to ensure a desired SINR. This tradeoff is inherent in ISAC systems since the total power available for communication and sensing is limited. The aforementioned tradeoff is more prominent at higher SINRs due to increased signal power requirement at the users. As before, the hybrid RIS assisted ISAC system is significantly better than the benchmark schemes.
	
	To compute the additional power consumed by the hybrid RIS, we substitute the parameters used for the simulation in~\eqref{eq:lemma1}. For $P_t/M = 0~{\rm dB}$, a hybrid RIS with an amplifier gain of $\eta = 10~{\rm dB}$ and $L=20$ is consuming an additional power about $-12~{\rm dB}$ while improving the target illumination power by $11~{\rm dB}$.  However, from Fig.~\ref{fig:simulations}(a), a fully passive RIS requires $P_t/M =7~{\rm dB}$ to achieve similar performance, which is significantly high. To conclude, using a hybrid RIS leads to  significant improvements in the performance of ISAC systems without significantly increasing the power consumption.
	\section{Conclusions} \label{sec:conclusions}
	In this paper, we considered a hybrid RIS to enhance the performance of an ISAC system serving multiple users and targets. Specifically, we designed  transmit beamformers and RIS coefficients to maximize the worst-case target illumination power while ensuring a minimum~\ac{sinr} for the users and keeping the RIS noise power bounded. We used an alternating optimization based scheme and designed beamformers by fixing RIS coefficients and vice versa. Specifically, we relaxed each of the subproblems to obtain an SDP, which was then solved using off-the-shelf solvers. The worst-case target illumination power with the proposed hybrid RIS assisted ISAC systems is found to be remarkably higher than that of passive RIS assisted ISAC systems as well as ISAC systems without \ac{ris} even when only a small fraction of the hybrid \ac{ris} elements are active.

	\bibliographystyle{ieeetran}
	\bibliography{IEEEabrv,bibliography}

\end{document}